\newcommand{\state}{x}
\newcommand{\statedyn}{f}
\newcommand{\outputvals}{y}
\newcommand{\ctrl}{u}
\newcommand{\ctrldim}{m}
\newcommand{\costcomponent}{g}
\newcommand{\timehorizon}{T}
\newcommand{\rewardparam}{\theta}
\newcommand{\ra}{\rightarrow}
\newcommand{\R}{\mathbb{R}}
\newcommand{\feedback}{\gamma}
\newcommand{\feedbackset}{\Gamma}
\newcommand{\cost}{J}
\newtheorem{remark}{Remark}
\newtheorem{proposition}{Proposition}
\newtheorem{definition}{Definition}
\let\originalleft\left
\let\originalright\right
\renewcommand{\left}{\mathopen{}\mathclose\bgroup\originalleft}
\renewcommand{\right}{\aftergroup\egroup\originalright}
\title{\LARGE \bf Cost Inference for Feedback Dynamic Games from Partial and Incomplete Observations
}
\title[Cost Inference for Feedback Dynamic Games]{Cost Inference for Feedback Dynamic Games from\\Noisy Partial State Observations and Incomplete Trajectories}
\author{Jingqi Li}
\affiliation{
  \institution{University of California, Berkeley}
  \city{Berkeley}
  \country{United States}}
\email{jingqili@berkeley.edu}
\author{Chih-Yuan Chiu}
\affiliation{
  \institution{University of California, Berkeley}
  \city{Berkeley}
  \country{United States}}
\email{chihyuan\_chiu@berkeley.edu}
\author{Lasse Peters}
\affiliation{
  \institution{Delft University of Technology}
  \city{Delft}
  \country{Netherlands}}
\email{l.peters@tudelft.nl}
\author{Somayeh Sojoudi}
\affiliation{
  \institution{University of California, Berkeley}
  \city{Berkeley}
  \country{United States}}
\email{sojoudi@berkeley.edu}
\author{Claire Tomlin}
\affiliation{
  \institution{University of California, Berkeley}
  \city{Berkeley}
  \country{United States}}
\email{tomlin@eecs.berkeley.edu}
\author{David Fridovich-Keil}
\affiliation{
  \institution{University of Texas, Austin}
  \city{Austin}
  \country{United States}}
\email{dfk@utexas.edu}
\begin{abstract}
In multi-agent dynamic games, the Nash equilibrium state trajectory of each agent is determined by its cost function and the information pattern of the game. However, the cost and trajectory of each agent may be unavailable to the other agents. Prior work on using partial observations to infer the costs in dynamic games assumes an open-loop information pattern. In this work, we demonstrate that the feedback Nash equilibrium concept is more expressive and encodes more complex behavior. It is desirable to develop specific tools for inferring players' objectives in feedback games. Therefore, we consider the dynamic game cost inference problem under the feedback information pattern, using only partial state observations and incomplete trajectory data. To this end, we first propose an inverse feedback game loss function, whose minimizer yields a feedback Nash equilibrium state trajectory closest to the observation data. We characterize the landscape and differentiability of the loss function. Given the difficulty of obtaining the exact gradient, our main contribution is an efficient gradient approximator, which enables a novel inverse feedback game solver that minimizes the loss using first-order optimization. In thorough empirical evaluations, we demonstrate that our algorithm converges reliably and has better robustness and generalization performance than the open-loop baseline method when the observation data reflects a group of players acting in a feedback Nash game. %

\end{abstract}
\keywords{Inverse Games, Dynamic Game Theory, Nash Equilibrium}
\newcommand{\BibTeX}{\rm B\kern-.05em{\sc i\kern-.025em b}\kern-.08em\TeX}
\begin{document}

\pagestyle{fancy}
\fancyhead{}

\maketitle

\section{Introduction}
\label{sec: Introduction}

The safety and efficiency of urban traffic relies heavily on the ability of each participant to predict the effects of their actions on others' decisions \cite{molloy2018inverse,schwarting2021stochastic}.
For example, drivers on a highway may wish to halt an overtaking maneuver if they believe the other drivers are aggressive, and some drivers may decelerate their cars to avoid collision if they believe that another driver wishes to merge. 

A powerful paradigm for modeling the  interdependence of decisions in multi-agent settings is provided \textit{general-sum dynamic games} \cite{basar1998DynamicNoncooperativeGameTheory,isaacs1999differential}.
A Nash equilibrium solution of a game-theoretic model can be used to \emph{simultaneously} predict the actions of all agents in the scene.
This equilibrium solution is particularly expressive when the game possesses a feedback information structure.
In this case, each equilibrium strategy explicitly accounts for the dynamically evolving information available to each player over time. 

Despite the theoretical attractiveness of this modeling paradigm, in reality, autonomous agents often have only limited information available about the world around them.
For example, in urban traffic an autonomous agent typically has incomplete knowledge of the objectives of other players. To address this challenge, recent works on \textit{inverse dynamic game theory} \cite{rothfuss2017inverse,peters2021inferring,molloy2022inversebook} recover these objectives from past trajectory data.
Moreover, in realistic applications, only noisy sensor measurements of agents' states are available.
This partial observability further complicates the inverse game problem, and existing work \cite{peters2021inferring} treats this case in the open-loop information structure.

In this work, we present a gradient-based solver for inverse dynamic games, under the state feedback information structure. Our solver can recover objectives from partial state observations of incomplete trajectories. Both of these effects are common in robotics due to noisy perception and occlusions. 
We show that our algorithm converges reliably in practice, and demonstrate the superior robustness and generalization performance as compared with a baseline method which learns cost functions under the open-loop assumption \cite{peters2021inferring}, when the observation data is from a group of players pursuing a feedback Nash equilibrium strategy. 

Our contributions are threefold. Firstly, we characterize the solution set of the inverse feedback dynamic game problem. In particular, we show that the set of the global minima could be nonconvex and disconnected, and discuss regularization schemes to mitigate this problem. Secondly, we show the differentiability of the loss function in linear quadratic games and propose a computationally efficient procedure to approximate the gradient for nonlinear games. Finally, we propose an efficient first-order  coordinate-descent solver for the inverse feedback game problem, using noisy partial observations of an incomplete expert state trajectory. Experimental results show that our method reliably converges for inverse feedback games with nonlinear dynamics and is able to learn nonconvex costs. Moreover, the converged cost function can accurately predict the feedback Nash equilibrium state trajectories even for unseen initial states.

\section{RELATED WORK}
\label{sec: Related Work}

\subsection{Non-cooperative Dynamic Games}
\label{subsec: Multi-Player Path Planning via Dynamic Games}
Non-cooperative dynamic game theory \cite{basar1998DynamicNoncooperativeGameTheory,isaacs1999differential} provides a formal framework for analyzing strategic interaction in a multi-agent setting \cite{cruz1975survey,basar1998DynamicNoncooperativeGameTheory,lee2008human}. In non-cooperative games, each player minimizes its own individual cost function; since players' costs may not be mutually aligned, the resulting equilibrium behavior is generally competitive. Among different equilibrium concepts, the Nash equilibrium has been extensively studied because of its representative power of capturing many non-cooperative behaviors arising in real-world multi-agent systems~\cite{gabler2017game,schwarting2019social}.

Recent advances in the literature aim to develop efficient solutions to Nash equilibrium problems in dynamic games. Though the solutions for the open-loop and feedback Nash equilibrium in linear quadratic (LQ) games are well understood \cite{basar1998DynamicNoncooperativeGameTheory}, for nonlinear games there is no closed-form solution in general. The work \cite{ratliff2016characterization} characterizes the local Nash solution concept for open-loop Nash equilibrium. In the feedback setting, numerous approaches have been proposed under various special cases \cite{tanwani2019feedback,kossioris2008feedback}. A value iteration based approach for computing feedback Nash equilibria of nonlinear games without constraints is introduced in \cite{herrera2019algorithm}. Recently, a set of KKT conditions for feedback Nash equilibria in constrained nonlinear games is derived in \cite{laine2021computation}. Computing a feedback Nash equilibrium is challenging due to the nested KKT conditions in different time steps.

Our work draws upon the ILQGames \cite{fridovich2020efficient} framework, which at each iteration solves a linear-quadratic game that approximates the original game. The construction of the approximate game parallels the iterative linearization and quadraticization methods of iterative LQR \cite{LiTodorov2004iterative}, and the dynamic programming equations that characterize equilibrium strategies in linear quadratic dynamic games \cite{basar1998DynamicNoncooperativeGameTheory}. This approach differs from the ALGames \cite{cleac2019algames} method, which computes an open-loop Nash equilibrium strategy.

\subsection{\emph{Inverse} Non-cooperative Dynamic Games}
\label{subsec: Cost, Belief, and Information Pattern Inference}

In contrast to the forward game problem of computing a strategy in dynamic games, an inverse game problem amounts to finding objectives for all agents such that the corresponding strategic (e.g., Nash equilibrium) interactions reproduce expert demonstrations. The inverse game problem is important because it paves the way for an agent to understand the preferences which explain other agents' behavior, which may facilitate more efficient multi-agent interaction and coordination. %

The problem of inverse infinite-horizon LQ games is considered in \cite{inga2019solution}, where the set of cost functions whose feedback Nash equilibrium strategies coincide with an expert strategy is derived. In \cite{rothfuss2017inverse,yu2022inverse}, the two-player inverse LQ game is solved by transforming the problem to an inverse optimal control under the assumption that the control input data of one player is known. 
Two methods based on the KKT conditions of an open-loop Nash equilibrium are proposed for open-loop general-sum differential games in \cite{molloy2019inverse}. Several necessary conditions for open-loop Nash equilibria are proposed in \cite{MOLLOY201711788} and used for developing an inverse game solution for some classes of open-loop games.

Recently, an efficient bilevel optimization framework \cite{peters2021inferring} based on the open-loop Nash equilibrium KKT conditions was proposed for solving inverse games with an open-loop Nash assumption. Another line of work on inferring costs in open-loop games \cite{awasthi2020inverse,inga2019inverse,englert2017inverse} proposes to minimize the residual violation of the KKT conditions. This KKT residual framework assumes the knowledge of complete trajectory data and is a convex problem. Given the difficulty of evaluating KKT conditions for feedback Nash equilibria in nonlinear games \cite{laine2021computation}, the extension of the KKT residual method to feedback nonlinear games may be subject to numerical difficulty. 

A bilevel optimization approach for inverse feedback game problem is proposed in \cite{molloy2022inverse}, with the assumption that both the expert state and control trajectories are observed without noise. In addition, an inverse game solver is proposed in \cite{Mehr2021MaximumEntropyMultiAgentDynamicGames} where they infer the players' cost functions with the assumption that the expert strategy follows a new concept called Maximum Entropy Nash Equilibrium. 
To the best of the authors' knowledge, there is no work on inferring cost functions of nonlinear dynamic games under feedback Nash equilibrium condition, from noisy partial state observation and incomplete trajectory data.

\section{PRELIMINARIES}
\label{sec: Preliminaries}

Consider an $N$-player, $T$-stage, deterministic, discrete-time dynamic game, with a state $x_t^i\in\mathbb{R}^{n_i}$ and control input $\ctrl^i_t \in \R^{\ctrldim_i}$ for each player $i \in [N]:= \{1, \cdots, N\}$, $t\in[T]$. Let the dimension of the joint state and control input be $n:=\sum_{i=1}^N n_i$ and $m:=\sum_{i=1}^N m_i$, respectively. We denote by $\state_t:=[\state_t^1,\dots,\state_t^N]\in\mathbb{R}^n$ and $\ctrl_t:=[\ctrl_t^1,\dots,\ctrl_t^N]\in\mathbb{R}^m$ the joint state and joint control at time $t\in[T]$, respectively. The joint dynamics for the system is given by the differentiable dynamics map $f_t(\cdot,\cdot): \R^n \times \R^{m}\to \mathbb{R}^n$:
\begin{align}\label{eq:general_game_dynamics}
    \state_{t+1} = \statedyn_t(\state_t, \ctrl_t), \hspace{6mm} \forall \hspace{0.5mm} t = 1, \cdots, T.
\end{align}
We denote by $\mathbf{f}:=\{f_t\}_{t=1}^T$ the set of dynamics across all the time instances within horizon $T$. We define $\mathbf{\state}:=\{\state_t\}_{t=1}^{T}$ and $\mathbf{\ctrl}:=\{\ctrl_t\}_{t=1}^T$ to be a state trajectory and control trajectory, respectively, if $x_{t+1}=f(x_t,u_t)$, for each $t\in[T]$. The objective of each agent $i$ is to minimize its overall cost, given by the sum of its running costs $\costcomponent_t^i: \R^n \times \R^{m} \ra \R$ over the time horizon:
\begin{align} \label{Eqn: Cost, integrated}
    \cost^i(\mathbf{\state},\mathbf{\ctrl}) := \sum_{t=1}^{T} \costcomponent^i_t(\state_t, \ctrl_t)
\end{align}
Define $\costcomponent_{t}:=\{\costcomponent_{t}^1,\costcomponent_t^2,\cdots, \costcomponent_{t}^N\}$, $t\in[T]$. We denote by $\mathbf{g}:=\{g_t\}_{t=1}^T$ the set of cost functions for all the agents within horizon $T$.

To minimize \eqref{Eqn: Cost, integrated}, each player uses their observations of the environment to design a sequence of control inputs to deploy during the discrete time interval $[T]$. The information available to each player at each time characterizes the \textit{information pattern} of the dynamic game, and plays a major role in shaping the optimal responses of each player \cite{basar1998DynamicNoncooperativeGameTheory}. Below, we explore two such information patterns---\textit{feedback} and \textit{open-loop}.

\subsection{Nash Solutions in Feedback Strategies}
\label{subsec: Feedback Strategies and Nash Equilibrium}

Under the state feedback information pattern, each player observes the state $\state_t$ at each time $t$, and uses this information to design a \textit{feedback strategy} $\feedback_t^i: \R^n \ra \R^{m_i}$, given by: $\ctrl_t^i := \gamma_t^i(\state_t)$, for each $i \in [N]$ and $t \in [T]$. Let $\feedback_t(x_t):=[\feedback_t^1(x_t),\feedback_t^2(x_t),\dots,\feedback_t^N(x_t)]\in\mathbb{R}^m$.

Following the notation of \cite{basar1998DynamicNoncooperativeGameTheory}, we denote by $\Gamma_t^i$ the set of all state feedback strategies of player $i$, for each $i \in [N]$. Under this \textit{feedback} information pattern, the Nash equilibrium of the dynamic game is as defined below.

\begin{definition}[\textbf{Feedback Nash Equilibrium (FBNE)} {\cite[Ch.~6]{basar1998DynamicNoncooperativeGameTheory}}]
The set of control strategies $\{{\gamma_t^{1*}}, \cdots, {\gamma_t^{N*}}\}_{t=1}^T$ is called a \textit{feedback Nash equilibrium} if no player is incentivized to unilaterally alter its strategy. Formally:
\begin{align}\color{red} \label{Eqn: Feedback Nash Equilibrium}
    &{W_t^{i*}}\left({\state_t,[\feedback_{t}^{1*}}(\state_t), \ldots,  {\feedback_t^{i*}}(\state_t), \ldots, {\feedback_t^{N*}}(\state_t)]\right) \\ \nonumber
    & \leq {W_t^{i*}}\left(\state_t,[{\feedback_t^{1*}}(\state_t), \ldots, {\feedback_t^i}(\state_t),  \ldots, {\feedback_t^{N*}}(\state_t)]\right), \forall \feedback_t^{i} \in \feedbackset_t^{i}, \forall t\in[T].
\end{align}
where ${W_t^{i*}}(\cdot,\cdot):\mathbb{R}^n\times \mathbb{R}^m\to \mathbb{R}$, $t\in[T]$ is the optimal state-action function defined as follows, 
\begin{equation}
    \begin{aligned}
    {W_T^{i*}}(x_T,u_T)&:= g_T^i(x_T,u_T)\\
    {W_t^{i*}}(x_t,u_t)& := g_t^i(x_t,u_t) + {V^{i*}_{t+1}}(x_{t+1}),\forall t\in[T-1],\\
    {V_t^{i*}}(x_t)&:= {W_t^{i*}}(x_t,[{\gamma_t^1}^*(x_t),\dots, {\gamma_t^N}^*(x_t)]), \forall t\in[T].
    \end{aligned}
\end{equation}
\end{definition}
We define $\mathbf{\state}$ and $\mathbf{\ctrl}$ to be a FBNE state trajectory and a FBNE control trajectory, respectively, if $u_t^i={\gamma_t^{i*}}(x_t)$, for each $i\in[N]$ and $t\in [T]$. We denote by $\xi(\mathbf{f}, \mathbf{\costcomponent})$ the set of all FBNE state trajectories in the game defined by the dynamics $\mathbf{f}$ and cost functions $\mathbf{g}$. 

\begin{remark}[Strong Time Consistency]\label{remark:strong time-consistency}
The FBNE conditions of \eqref{Eqn: Feedback Nash Equilibrium} implicitly enforce strong time-consistency \cite[Def. 5.14]{basar1998DynamicNoncooperativeGameTheory} of the equilibrium strategies.
That is, FBNE does not admit arbitrary feedback strategies, but imposes the additional condition that those strategies must also be in equilibrium for any subgame starting at a later stage from an arbitrary state.
\end{remark}

\subsection{Nash Solutions in Open-loop Strategies}
\label{subsec: Open-Loop Strategies and Nash Equilibrium}

In contrast, under the open-loop information pattern, each player only observes the initial state $\state_1$. In this case, the strategy for each player $i \in [N]$ is a map from $x_1$ to $\{u_1^i,u_2^i,\cdots, u_T^i\}$, which we denote by $\phi^i(\cdot):\mathbb{R}^n\to \underbrace{\mathbb{R}^{m_i}\times \cdots \times \mathbb{R}^{m_i}}_T$. Let $\Phi^i$ be the set of all open-loop strategies of the player $i$, $i\in [N]$. The corresponding \textit{open-loop Nash equilibrium} is defined as follows.
 
\begin{definition}[\textbf{Open-Loop Nash Equilibrium (OLNE)} {\cite[Ch.~6]{basar1998DynamicNoncooperativeGameTheory}}] \label{Def: Open-Loop Nash Equilibrium}
The tuple of control strategies $\{\phi_1^*, \cdots, \phi_N^*\}$ is called an \textit{open-loop Nash equilibrium} if no player is incentivized to unilaterally alter its sequence of control inputs. Formally:
\begin{align} \label{Eqn: Open-Loop Nash Equilibrium}
    &\cost^{i}\left(\mathbf{\state}, [{\phi^1}^*(x_1), \cdots, {\phi^{i}}^*(x_1), \cdots, {\phi^{N}}^*(x_1)]\right) \\ \nonumber
    \leq \hspace{0.5mm} &\cost^{i}\left(\mathbf{\state},[{\phi^1}^*(x_1), \cdots, {\phi^{i}}(x_1), \cdots,  {\phi^{N}}^*(x_1)]\right),
    \forall \phi^{i} \in \Phi^i, \forall x_1\in\mathbb{R}^n.
\end{align}
\end{definition}

\begin{remark}\label{remark:weak time-consistency}
The OLNE definition does not imply the strong time-consistence as in the feedback counterpart \cite{basar1998DynamicNoncooperativeGameTheory}.
\end{remark}

\subsection{Feedback vs. Open-loop Nash Equilibria}
In this subsection, we demonstrate the difference between open-loop and feedback Nash equilibria and show the necessity of developing specific solutions for cost inference problems with the feedback information pattern, instead of applying existing work with the open-loop assumption \cite{peters2020inference}. To this end, we introduce below several linear-quadratic (LQ) games where the open-loop Nash equilibrium (OLNE) and feedback Nash equilibrium (FBNE) state trajectories differ substantially. 
\begin{figure}[t!]
    \centering
    \includegraphics[width=0.48\textwidth, trim = 0cm 5.5cm 0cm 4.5cm]{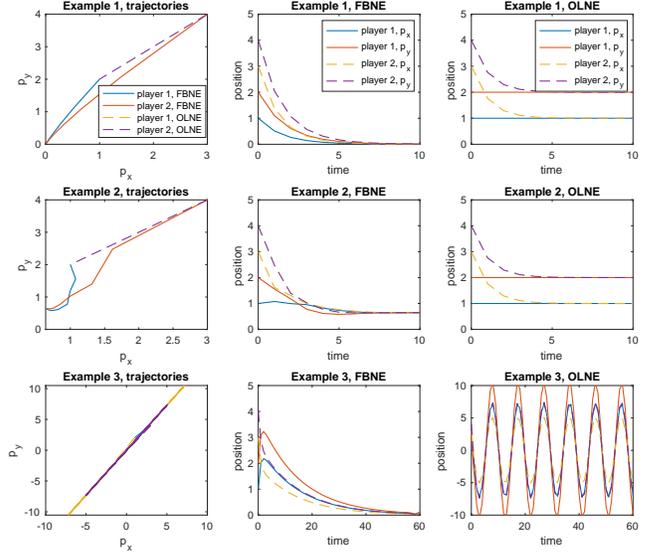}
    \caption{Examples of cost functions that yield trajectories that are different under the OLNE and FBNE assumptions.} %
    \label{fig:proposition 1}
\end{figure}
LQ games are a class of dynamic games with dynamics and player objectives of the form in \eqref{eq:LQ_dynamics} and \eqref{eq:LQ_costs}, respectively,
\begin{equation}\label{eq:LQ_dynamics}
    \begin{aligned}
    \state_{t+1}=A_t \state_t + \sum_{i\in[N]} B_t^i \ctrl_t^i,\ \forall t\in[T],
    \end{aligned}
\end{equation}
\begin{equation}\label{eq:LQ_costs}
    \costcomponent_t^i(\state_t, \ctrl_t) =\frac{1}{2} (\state_t^\top Q^i_t \state_t +\sum_{j\in[N]} {\ctrl_t^j}^\top R^{ij}_t \ctrl_t^j), \forall t\in[T],\forall i\in [N],
\end{equation} 
where matrices $\{A_t, B_t^i\}$, positive semidefinite matrix $Q_t^i$ and positive definite matrix $R_t^{ij}$ are defined with appropriate dimensions, for each $i,j\in[N]$ and $t\in[T]$. 

\textbf{Case Study:} We consider a two-player LQ game with a state vector $x_t=[p_{x,t}^1, p_{y,t}^1, p_{x,t}^2,p_{y,t}^2]$, where $p_{x,t}^i$ and  $p_{y,t}^i$ are the x- and y-coordinates of agent $i \in \{1,2\}$, respectively. Let $\ctrl_t^i \in \R^2$ be the control input for the $i$-th agent, $i\in\{1,2\}$. In this setting, we consider a class of games in which the first agent wants to drive the second agent to the origin, while the second agent wants to catch the first agent. The agents' joint dynamics and costs at time $t\in[T]$ are specified as follows:
\begin{equation}
    \begin{aligned}      
    \state_{t+1} &= \begin{bmatrix} I_2 & 0\\ 0 & I_2 \end{bmatrix} \state_t + \begin{bmatrix} I_2 \\ 0 \end{bmatrix} \ctrl_t^1 + \begin{bmatrix} 0\\ I_2 \end{bmatrix} \ctrl_t^2,\\
      \costcomponent_t^1(\state_t,\ctrl_t) &= \|p_{x,t}^2\|_2^2 + \|p_{y,t}^2\|_2^2 + \|\ctrl_t^1\|_2^2, \\  
      \costcomponent_t^2(\state_t, \ctrl_t) &= \|p_{x,t}^2 - p_{x,t}^1\|_2^2 + \|p_{y,t}^2 - p_{y,t}^1\|_2^2 + \|\ctrl_t^2\|_2^2,
    \end{aligned}\label{eq:LQ_counter_example}
\end{equation}
where $I_2$ is the $2 \times 2$ identity matrix. We visualize the unique FBNE and OLNE state trajectories of this example in the first row in Fig. \ref{fig:proposition 1}. If we modify the cost function of the first player such that it wants to lead the $x$- and $y$-position of the second player to be aligned with each other, i.e., 
\begin{equation}
    \hat{\costcomponent}_t^1(\state_t,\ctrl_t) := \|p_{x,t}^2-p_{y,t}^2\|_2^2 + \|u_t^1\|_2^2,
\end{equation}
then, the unique FBNE and OLNE state trajectories are still different, as shown in the second row of Fig. \ref{fig:proposition 1}. Moreover, observations of players may be noisy in practice. To illustrate this, we consider a task where the two agents want to catch each other, but the first player's observation of the second player's position is inaccurate. We modify the first player's cost in \eqref{eq:LQ_counter_example} as follows:
\begin{equation}
    \hat{\hat{\costcomponent}}_t^1(\state_t,\ctrl_t):=\|p_{x,t}^1-2p_{x,t}^2\|_2^2 + \|p_{y,t}^1-2p_{y,t}^2\|_2^2 + \|u_t^1\|_2^2 .
\end{equation}
The third row of Fig. \ref{fig:proposition 1} reveals that the FBNE state trajectory is robust to inaccurate observations, but the unique OLNE state trajectory is not.

Thus, it is readily apparent that the OLNE and FBNE state strategies can be substantially different even for fixed cost functions. This difference in expressive power may be understood as a consequence of the strong time consistency property, which is enforced in the feedback information structure but not in the open-loop setting, per Remarks \ref{remark:strong time-consistency} and \ref{remark:weak time-consistency}.
A similar problem arises in the cost inference problem, where the existing OLNE cost inference algorithms may fail to infer the correct cost function in feedback games.

\section{Problem Statement}
\label{sec: Problem Statement}

Let $\mathbf{x}$ be an expert FBNE state trajectory under the nonlinear dynamics $\mathbf{f}$ but unknown cost functions $\{g_t^i\}_{t=1,i=1}^{T,N}$. Let $\mathcal{T}\subseteq [T]$ be the set of observed time indices of the trajectory $\mathbf{x}$. We denote by $\mathbf{y}_{\mathcal{T}}:=\{\outputvals_t\}_{t\in\mathcal{T}}$ the observation data of $\mathbf{x}$, where $y_t\in\mathbb{R}^\ell$ is a partial observation of the state, composed of certain coordinates of $x_t$ corrupted by noise. The task is to infer the cost function of each player such that those inferred costs jointly yield a FBNE state trajectory that is as close as possible to the observed trajectory. We parameterize the cost of the player $i$ by a vector $\theta^i\in\mathbb{R}^{d_i}$, and let $\theta:=[\theta^1,\theta^2,\dots,\theta^N]\in\mathbb{R}^d$. Denote by $g_{t,\theta}^i(x_t,u_t)=\sum_{j=1}^{d_i} \theta_j^i b_{t,j}^i(x_t,u_t)$ player $i$'s parameterized cost at time $t\in[T]$, for some basis functions $\{\{b_{t,j}^i\}_{j=1}^{d_i}\}_{t=1,i=1}^{T,N}$. 
Define $\mathbf{\costcomponent}_\theta:=\{g_{t,\theta}^i\}_{t=1,i=1}^{T,N}$. Formally, this problem is of the form:
\begin{equation}\label{problem_formulation}
    \begin{aligned}
    &\min_{\theta,x_1,\mathbf{\hat{\state}}} \hspace{5mm}
    &&  -p(\mathbf{\outputvals}_{\mathcal{T}}|\mathbf{\hat{\state}}) \\
    &\textrm{s.t. } && \mathbf{\hat{\state}}\in \xi(\mathbf{f}, \mathbf{g}_\theta,x_1),
    \end{aligned}
\end{equation}
where $p(\cdot|\cdot)$ is the likelihood function corresponding to a known sensor model and $\xi(\mathbf{f},\mathbf{g}_\theta,x_1)$ represents the set of state trajectories from the initial condition $x_1\in\mathbb{R}^n$ following a FBNE strategy, under the cost set $\mathbf{\costcomponent}_{\theta}$. Due to the noisy partial observation, $x_1$ is not assumed to be known and instead needs to be inferred as well in \eqref{problem_formulation}. Note that the above formulation can also be extended to the cases where multiple partially observed incomplete trajectories from different initial conditions are available.

\textbf{Running example:} 
We consider a highway platooning scenario where player 1 wants to guide player 2 to a particular lane of the road. The joint state vector is $x_t=[p_{x,t}^1,p_{y,t}^1, \beta_t^1, v_t^1,p_{x,t}^2,p_{y,t}^2, \beta_t^2, v_t^2]$. The time horizon $T=40$. The dynamics model for the player $i$ is:
\begin{equation}\label{eq:dubins_car}
    \begin{bmatrix}
    p_{x,t+1}^i\\p_{y,t+1}^i\\\beta^i_{t+1} \\ v^i_{t+1}
    \end{bmatrix} = \begin{bmatrix}
    p_{x,t}^i\\p_{y,t}^i\\\beta^i_t\\v^i_t
    \end{bmatrix} + \Delta T\begin{bmatrix}
    v_t^i\cos(\beta^i_t)\\v_t^i\sin(\beta_t^i)\\\omega_t^i\\a_t^i
    \end{bmatrix}
\end{equation}
where $\Delta T$ is a time discretization constant and $u_t^i=[\omega_t^i,a_t^i]\in\mathbb{R}^2$ is the control input for player $i\in[N]$. Let $p_x^*$ be the target lane that player 1 wants to guide player 2 to. We parameterize the cost function of the player $i$ by $\theta^i\in\mathbb{R}^2$, 
\begin{align}
    \costcomponent_{t,\theta}^1(\state_t,\ctrl_t) &= \theta_1^1 \|p_{x,t}^1\|_2^2 + \theta_2^1 \| p_{x,t}^2 - p_x^* \|_2^2 + \|u_t^1\|_2^2 \label{eq:2_cars_costs}\\
    \costcomponent_{t,\theta}^2(\state_t,\ctrl_t) &= \theta_1^2\|p_{x,t}^2 - p_{x,t}^1\|_2^2 + \theta_2^2\|v_{t}^2-1\|_2^2 +  \|u_t^2\|_2^2 ,\forall t\in[T].\nonumber
\end{align}
The ground truth solution is $\theta^* = [0,8,4,4]$. We assume that there is a period of occlusion happening from the time index $t=11$ to $t=19$, and the observed time index set is $\mathcal{T}=\{1,2,\dots,10,20,21,\dots,40\}$. Also, it may be difficult for a human driver to measure other vehicles' velocity accurately, and therefore we assume that partial observation data $\mathbf{y}_\mathcal{T}$ excludes the velocity of both cars in the data set, and is further subject to Gaussian noise of standard deviation $\sigma$. The initial condition $x_1$ is not known and needs to be inferred. We visualize the ground truth solution in the first subplot of Fig. \ref{fig:problem_statement} and the noisy incomplete trajectory data in the second subplot of Fig.~\ref{fig:problem_statement}.

The many challenges of the above problem include: (a) partial observation; (b) noisy and incomplete expert trajectory data; and (c) the difficulty of evaluating and differentiating the objective in \eqref{problem_formulation}, due to the challenge of computing a FBNE strategy in nonlinear games \cite{laine2021computation}. In the following sections, we will characterize the complexity of this inverse feedback game problem and propose an efficient solution.

\begin{figure}[t!]
    \centering
    \includegraphics[width=0.4\textwidth, trim=0cm 1.5cm 0cm 0cm]{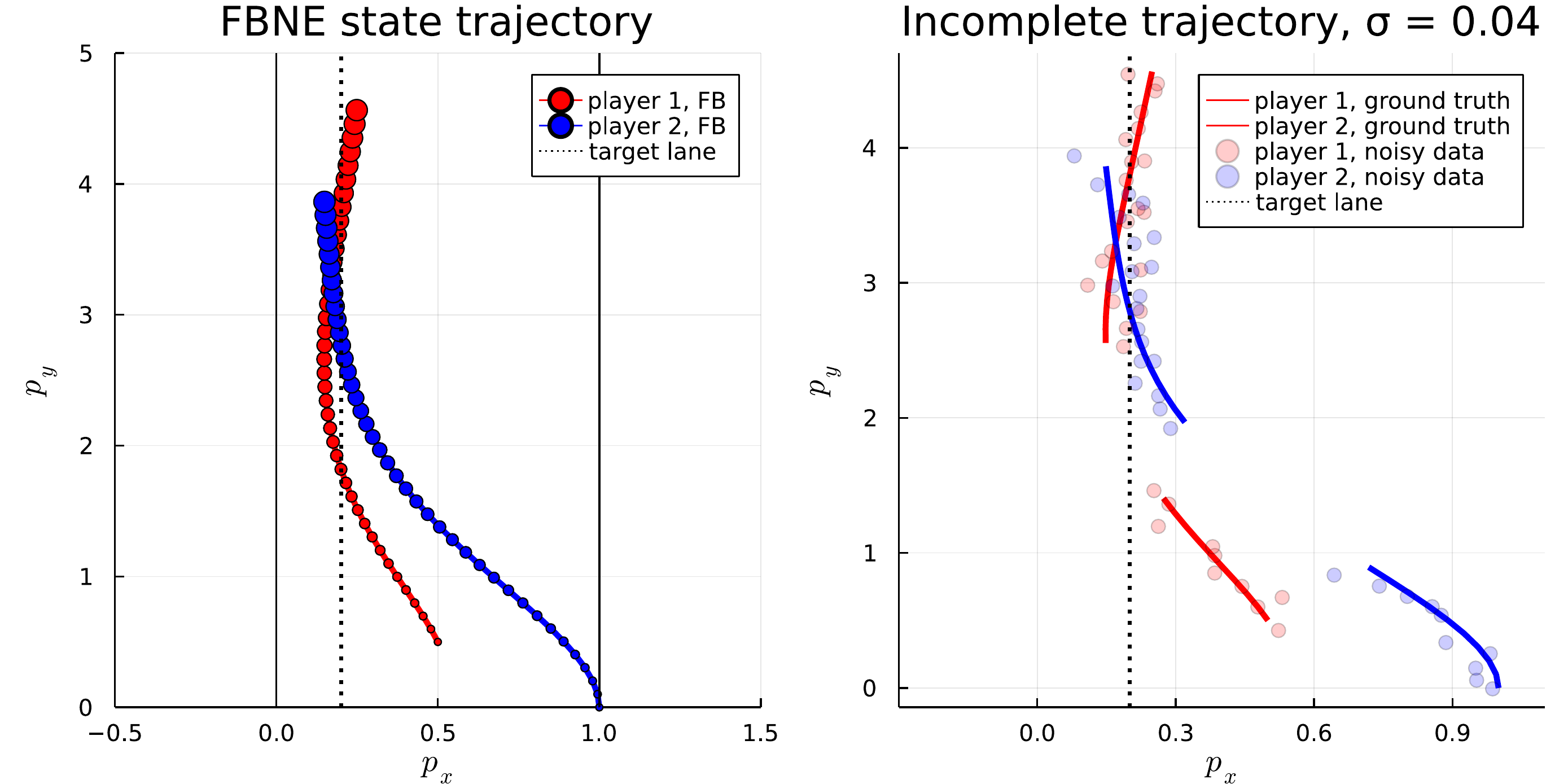}
    \caption{Visualization of the running example.}
    \label{fig:problem_statement}
\end{figure}

\section{Results: From Characterization to Computation}
\label{sec: Methods}
In this section, we first characterize the complexity of the inverse feedback game problem \eqref{problem_formulation}. In particular, we will show the nonconvexity of the loss function and the existence of multiple isolated \emph{global} minima. Based on this observation, we discuss regularization schemes that can mitigate this issue. Our main contribution is to characterize the differentiability of the inverse feedback game loss function in \eqref{problem_formulation}. Finally, we present a gradient approximation scheme that can be used in a first-order optimization formulation.

\subsection{Characterization of the Inverse Feedback Dynamic Game Problem}
The inverse feedback dynamic game problem \eqref{problem_formulation} is a constrained optimization problem, which is hard to solve due to the nonconvexity of the set $\xi(\mathbf{f},\mathbf{g}_\theta,x_1)$. %
With a slight abuse of notation, we denote by $\hat{\mathbf{x}}(\mathbf{f},\mathbf{g}_\theta,x_1) \in \xi(\mathbf{f}, \mathbf{g}_\theta, x_1)$ a FBNE state trajectory. To simplify the problem, we transform \eqref{problem_formulation} to an unconstrained problem by substituting a forward game solution $\hat{\mathbf{x}}(\mathbf{f},\mathbf{g}_\theta,x_1)$ into the likelihood function $p(\mathbf{y}_\mathcal{T}|\hat{\mathbf{\state}})$, as follows:
\begin{equation}\label{eq:from_constrained_problem_to_unconstrained_one}
    \hat{L}(\theta,x_1) :=- p(\mathbf{y}_{\mathcal{T}}|\hat{\mathbf{\state}}(\mathbf{f},\mathbf{g}_\theta,x_1)).
\end{equation}
The minimizer of \eqref{eq:from_constrained_problem_to_unconstrained_one} is a local optimum to the original problem \eqref{problem_formulation} and becomes global when $\xi(\mathbf{f},\mathbf{\costcomponent}_\theta,x_1)$ contains only a single element. 

Before we dive into the nonlinear setting, let us first consider a simplified LQ case to highlight the main challenges associated with the optimization of this loss.
In the LQ case, the \emph{evaluation} of the loss \eqref{eq:from_constrained_problem_to_unconstrained_one} is straightforward if there exists a closed-form expression for $p(\mathbf{y}_\mathcal{T}|\hat{\mathbf{x}})$, e.g., under a Gaussian observation model.
Even in that setting, however, it is important to realize that the problem remains nonconvex, as shown in Fig.~\ref{fig:contour_loss_2_player}.
The following proposition makes this challenge explicit, and the proof can be found in the Appendix.

\begin{figure}[t!]
    \centering
    \includegraphics[width=0.4\textwidth, trim = 0cm 8cm 0cm 6.5cm ]{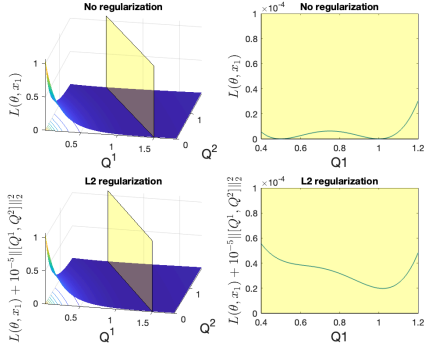}
    \caption{Visualization of the loss function $L(\theta,x_1)$ of the LQ game specified in \eqref{example:1D_loss_dynamics} and \eqref{example:1D_loss_costs}, and its $L_2$ regularization, with an initial condition $x_1=1$. We adopt Gaussian likelihood function. The yellow hyperplane is drawn according to $2Q^1+Q^2=3$. With $L_2$ regularization, the number of global minima is reduced.
    } 
    \label{fig:contour_loss_2_player}
\end{figure}

\begin{proposition}\label{prop:characterization of loss}
There exists an inverse LQ game problem \eqref{problem_formulation}: (a) whose global minima are isolated, and (b) for which there exist multiple cost functions that exactly match expert data from any initial condition, when there is no observation noise.
\end{proposition}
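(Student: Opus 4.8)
The plan is to exhibit a concrete low-dimensional LQ game for which one can compute the FBNE state trajectory in closed form as a function of the cost parameters, and then show directly that (a) the set of parameters exactly reproducing a given expert trajectory consists of several isolated points, establishing both claims at once. The cleanest construction is a scalar, possibly two-stage, two-player game of the form in \eqref{eq:LQ_dynamics}--\eqref{eq:LQ_costs} with state cost weights $Q^1, Q^2$ (and trivial or fixed control penalties) as the free parameters $\theta$. First I would write down the coupled Riccati recursions that characterize the FBNE feedback gains in the LQ setting (these are standard, cf.\ \cite[Ch.~6]{basar1998DynamicNoncooperativeGameTheory}), propagate them from the known initial condition $x_1$, and obtain the resulting state trajectory $\hat{\mathbf{x}}$ as an explicit rational function of $(Q^1, Q^2)$.

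The key observation is that, because the feedback gains enter the closed-loop dynamics through products and inverses of the Riccati quantities, the map $\theta \mapsto \hat{\mathbf{x}}(\mathbf{f}, \mathbf{g}_\theta, x_1)$ is a nonlinear (rational) map, so the preimage of the expert trajectory need not be an affine subspace. In fact I would engineer the example so that matching the expert data reduces to a polynomial equation in the scalar combination of $Q^1$ and $Q^2$ that appears in the closed-loop gain — something like a quadratic whose two distinct real roots both yield FBNE trajectories identical to the expert's. That simultaneously gives claim (b), multiple cost functions exactly matching the noise-free expert data from any initial condition (any $x_1$, since the matching condition will be on the gains, independent of $x_1$), and claim (a), that these solutions are isolated points rather than a connected continuum. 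To make the genericity of the matching condition transparent I would pick the expert parameters $\theta^*$ (e.g.\ the point on the hyperplane $2Q^1 + Q^2 = 3$ suggested by Fig.~\ref{fig:contour_loss_2_player}) and verify that the loss $\hat L(\theta, x_1)$ vanishes on a zero-dimensional set containing $\theta^*$.

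Finally, for the isolatedness claim I would either appeal to the implicit function theorem — checking that the Jacobian of the matching map has full rank at each solution, so each is locally unique — or, more elementarily in the scalar construction, simply note that a nonconstant polynomial has finitely many roots. It is worth also confirming the trajectories are the \emph{unique} FBNE for the relevant parameters (guaranteed in finite-horizon LQ games under the standing positive-definiteness assumptions on $R^{ij}_t$ and positive-semidefiniteness of $Q^i_t$), so that $\xi(\mathbf{f}, \mathbf{g}_\theta, x_1)$ is a singleton and the distinction between the constrained problem \eqref{problem_formulation} and its unconstrained surrogate \eqref{eq:from_constrained_problem_to_unconstrained_one} does not muddy the argument.

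The main obstacle I anticipate is bookkeeping: writing the coupled feedback Riccati equations for an $N=2$ game and carrying the symbolic dependence on $(Q^1,Q^2)$ through even one or two stages is error-prone, and one must choose the dynamics matrices, horizon, and which entries of $\theta$ are free carefully enough that the resulting matching condition is genuinely a nontrivial polynomial with multiple real roots (rather than collapsing to something linear, which would make the solution set a connected hyperplane and defeat part (a)). Getting a minimal yet honest example — ideally scalar states with $T=2$ — is where the real work lies; everything after that is routine algebra.
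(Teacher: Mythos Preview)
Your proposal is essentially the same approach as the paper's: a scalar-state two-player LQ game with $(Q^1,Q^2)$ as the free parameters, closed-form coupled Riccati recursions yielding the closed-loop gain as a rational function of $(Q^1,Q^2)$, and matching the expert trajectory reducing to a polynomial system whose finitely many roots are the isolated global minima; the paper even lands on exactly the hyperplane $2Q^1+Q^2=3$ you anticipated and finds the two roots $(1,1)$ and $(\tfrac12,2)$. One small caution: your tentative hope that $T=2$ suffices will not pan out---a single transition gives only one rational matching condition in two unknowns, hence a one-dimensional curve of solutions rather than isolated points; the paper needs horizon $T=3$ so that two stages of closed-loop dynamics furnish two independent equations, and the second of these is what collapses to the quadratic with two real roots.
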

\begin{remark}
Proposition~\ref{prop:characterization of loss} does not imply that any inverse LQ game problem will suffer from the multiple global minima issue. Instead, Proposition~\ref{prop:characterization of loss} suggests that simply normalizing the cost vector does not rule out the possibility of having multiple global solutions. That is, there exist two cost parameter vectors which are linearly independent, but generate the same FBNE state trajectories for any given initial state. This non-injective mapping from the cost parameter space to the FBNE state trajectory space is a fundamental problem in inverse feedback games, and is not particular to the formulation \eqref{problem_formulation}. In practice, this multiple global minima issue could be mitigated by adding $L_2$ regularization, as visualized in Fig.~\ref{fig:contour_loss_2_player}. 
\end{remark}

Though being nonconvex, the loss function $\hat{L}(\theta,x_1)$ is differentiable with respect to both $\theta$ and $x_1$ under the condition of Theorem 3.2 in \cite{laine2021computation}, which follows from the implicit function theorem \cite{krantz2002implicit}. Inspired by the success of gradient-based methods in non-convex optimization with differentiable objective functions \cite{nesterov1983method,boyd2004convex,sutskever2013importance}, one natural idea is to apply gradient descent to minimize $\hat{L}(\theta,x_1)$. In what follows, we discuss efficient ways to evaluate and differentiate $\hat{L}(\theta,x_1)$ in nonlinear games. %

\subsection{Efficient Computation for a FBNE State Trajectory in Nonlinear Games}
It is easy to evaluate $\hat{L}(\theta,x_1)$ for LQ games, but when dynamics are nonlinear or objectives are non-quadratic, the problem becomes more challenging \cite{laine2021computation}. In forward games, this challenge can be addressed by using the ILQGames algorithm \cite{fridovich2020efficient}, which finds approximate local FBNE solutions in smooth non-LQ dynamic games. Given the effectiveness of this approximation scheme in those domains, we also adopt it as a submodule for evaluating the loss $\hat{L}(\theta,x_1)$. Akin to the ILQR method \cite{mayne1966second,LiTodorov2004iterative}, in each step of the ILQGames algorithm, the system dynamics $x_{t+1}=f(x_t,u_t)$ and the costs $\{g_t^i(x,u)\}_{t=1,i=1}^{T,N}$ are linearized and quadraticized, respectively, around a state trajectory $\mathbf{x}$ and a control trajectory $\mathbf{u}$. A FBNE strategy for each player of the derived LQ game is then used to update the state and control trajectories. This iteration continues until a convergence criterion is satisfied. 

To be more specific, we approximate $\hat{L}(\theta,x_1)$ by a new loss function $\tilde{L}(\theta,x_1)$ defined as,
\begin{equation}
\hat{L}(\theta, x_1)\simeq\tilde{L}(\theta,x_1):=-p\big(\mathbf{y}_\mathcal{T}|\mathbf{x}(\tilde{\mathbf{f}}_\theta, \tilde{\mathbf{g}}_{\theta}, x_1)\big)
\end{equation}
where $\mathbf{\state}(\tilde{\mathbf{f}}_\theta, \tilde{\mathbf{g}}_\theta, x_1)$ represents a FBNE state trajectory from initial condition $x_1$, for the LQ game defined by the linearized dynamics $\tilde{\mathbf{f}}_\theta$, quadraticized cost set $\tilde{\mathbf{g}}_\theta:=\{\tilde{\mathbf{g}}_{t,\theta}^i\}_{t=1,i=1}^{T,N}$ at the converged solution returned by ILQGames solver. Note that the linearized dynamics $\tilde{\mathbf{f}}_\theta$ depend upon $\theta$ via the state trajectory about which $\mathbf{f}$ is linearized; this trajectory is simulated under the feedback policy returned by ILQGames, where the policy depends upon costs $\mathbf{g}_\theta$.

\subsection{Differentiating the Loss in the Inverse Feedback Game Problem}\label{sec:gradient approximation}

The challenge of computing a feedback Nash equilibrium strategy not only makes the evaluation of the loss function $\hat{L}(\theta,x_1)$ hard, but also renders differentiation difficult. In this work, we approximate the gradient of $\hat{L}(\theta,x_1)$ using a similar idea as the ILQGames algorithm in the previous section. In other words, we propose to use the LQ approximation of the nonlinear game specified by $\tilde{\mathbf{f}}_\theta$ and $\tilde{\mathbf{g}}_\theta$ to derive an approximation to the gradient of $\hat{L}(\theta,x_1)$.  Note that $\tilde{g}_{t,\theta}^i(x,u) = \sum_{j=1}^{d_i} \theta_j^i \tilde{b}_{t,j,\theta}^i(x,u)$, where $\tilde{b}_{t,j,\theta}^i(x,u):\mathbb{R}^n\times \mathbb{R}^m\to \mathbb{R}$ is the $j$-th quadraticized cost basis function. 
By the chain rule, we have
\begin{equation*}
\begin{aligned}
    \frac{\partial\tilde{L}(\theta,x_1)}{\partial \theta_j^i} &= -\nabla_{\mathbf{x}} p(\mathbf{y}_\mathcal{T}|\mathbf{x})\Big|_{\mathbf{x}(\tilde{\mathbf{f}}_\theta, \tilde{\mathbf{g}}_\theta,x_1)}\cdot \frac{\partial \mathbf{x}(\tilde{\mathbf{f}}_\theta, \tilde{\mathbf{g}}_\theta, x_1)}{\partial \theta_j^i},\\
    \frac{\partial \mathbf{x}(\tilde{\mathbf{f}}_\theta, \tilde{\mathbf{g}}_\theta, x_1)}{\partial \theta_j^i}&=\Big(\nabla_{\tilde{\mathbf{f}}_\theta}\mathbf{x}(\tilde{\mathbf{f}}_\theta, \tilde{\mathbf{g}}_\theta, x_1)\frac{\partial \tilde{\mathbf{f}}_\theta }{\partial \theta_j^i} + \nabla_{\tilde{\mathbf{g}}_\theta}\mathbf{x}(\tilde{\mathbf{f}}_\theta,\tilde{\mathbf{g}}_{\theta}, x_1)\frac{\partial \tilde{\mathbf{g}}_\theta}{\partial \theta_j^i}    \Big).
\end{aligned}
\end{equation*}
The complexity of differentiating $\tilde{L}(\theta,x_1)$ comes from the fact that the linearized dynamics and the quadraticized costs are functions of $\theta$ implicitly, which makes the total derivative hard to compute. We propose to approximate the above gradient by treating the linearized $\tilde{\mathbf{f}}_\theta$ and each quadraticized cost basis function $\tilde{b}_{t,j,\theta}^i$ as constants with respect to $\theta$, denoted by $\tilde{\mathbf{f}}$ and $\tilde{b}_{t,j}^i$, 
and only compute the partial derivative with respect to $\theta$, rather than the total derivative:
\begin{equation*}\label{eq:GD_approximation}
    \frac{\partial \tilde{L}(\theta,x_1)}{\partial \theta_j^i}\simeq - \nabla_{\mathbf{x}} p(\mathbf{y}_{\mathcal{T}}| \mathbf{x})\Big|_{\mathbf{x}(\tilde{\mathbf{f}},\tilde{\mathbf{g}}_\theta,x_1)}\cdot \frac{\partial \mathbf{x}(\tilde{\mathbf{f}},  \{\sum_{j=1}^{d_i} \theta_j^i \tilde{b}_{t,j}^i\}_{t=1,i=1}^{T,N},x_1  )  }{\partial \theta_j^i}.
\end{equation*}
This is based on the observation that at the convergence of the forward ILQGames solver, the linearized dynamics are a good approximation of the full nonlinear dynamics $\mathbf{f}$, so long as the cost parameter being perturbed remains sufficiently small. We adopt a similar approximation for the gradient $\nabla_{x_1}\tilde{L}(\theta, x_1)$ by fixing the linearized dynamics and quadraticized costs and obtaining the partial derivative with respect to $x_1$. 

In summary, we approximate $\nabla \hat{L}(\theta,x_1)$ by $\nabla \tilde{L}(\theta, x_1)$. In practice, $\nabla \tilde{L}(\theta,x_1)$ can be efficiently computed by automatic differentiation \cite[Ch. 8]{nocedal2006numerical}. As exemplified in Fig.~\ref{fig:gradient descent quality}, the proposed gradient approximation is virtually always a descent direction and therefore aligns well with the true gradient of $ \hat{L}(\theta,x_1)$. %

~\\

\subsection{An Inverse Feedback Game Solver}

In this subsection, we present a solver for the inverse feedback game problem \eqref{problem_formulation}. In what follows, we first discuss how the three challenges mentioned in Section~\ref{sec: Problem Statement} are handled in our solver. We then introduce the proposed solver in Algorithm~\ref{Alg: Inverse Game Solution via Gradient Descent}.  

The first two challenges on noisy partial observation and incomplete trajectory data are handled by maintaining an estimate of the full initial condition and a noise-free state-input trajectory.
As shown in Section~\ref{sec: Results}, this procedure of joint reconstruction and filtering enables our solver to reliably recover player costs even in scenarios of substantial partial observability.
The third difficulty of evaluating and differentiating the objective function in the inverse \emph{feedback} game problem is mitigated by the efficient approximation outlined in Section~\ref{sec:gradient approximation}. To jointly infer the initial condition, the cost and the state-input trajectory, we first adopt the coordinate gradient descent method, where gradient descent steps are first taken over the initial condition $\hat{\state}_1$, and then taken over the cost parameter. We update the estimate of the noise-free full state-input trajectory by computing a FBNE state trajectory from the inferred initial condition and the cost.

We summarize our proposed solver in Algorithm~\ref{Alg: Inverse Game Solution via Gradient Descent}.
At the $k$-th iteration, we first compute an approximate FBNE state trajectory $\tilde{x}^{(k)}$ and the associated LQ approximation via the ILQGames algorithm of \cite{fridovich2020efficient}. Using this LQ approximation, we estimate $\nabla_{x_1}\hat{L}(\theta,x_1^{(k)})$ using the procedure outlined in Section \ref{sec:gradient approximation}. We then update the initial condition $x_1^{(k)}$ by a step of gradient descent, where the stepsize is chosen by a suitable linesearch technique \cite[Ch. 3]{nocedal2006numerical} such that the loss $\hat{L}(\theta,x_1)$ is sufficiently decreased. Given the updated initial condition $x_1^{(k+1)}$, we find a new approximate FBNE state trajectory via the ILQGames algorithm again, which is then used to estimate $\nabla_\theta \hat{L}(\theta^{(k)},x_1^{(k+1)})$ via the procedure in Section \ref{sec:gradient approximation}. With this gradient, we update $\theta^{(k)}$ by one step of gradient descent with linesearch. We repeat this procedure until, at convergence, we find a locally optimal solution $(\hat{\theta}, \hat{x}_1)$.

\begin{algorithm}[t!]
{
\small
\SetAlgoLined

\KwData{Horizon $\timehorizon > 0$, initial solution $\rewardparam^{(0)} \in \R^d$, observed time index set $\mathcal{T}\subseteq [T]$, observation data $\mathbf{y}_{\mathcal{T}}$, max iteration number $K$ , tolerance $\epsilon$.} 

\KwResult{Inferred cost parameter $\hat{\theta}$ and $\hat{x}_1$}

\For{$k = 0,1,\ldots, K$}{
$(\tilde{\mathbf{\state}}^{(k)},  \{\tilde{\gamma}_t^i\}_{t=1,i=1}^{T,N}, \tilde{\mathbf{f}}_{\theta^{(k)}}, \tilde{\mathbf{g}}_{\theta^{(k)}} )\gets \textrm{ILQGames}(\mathbf{f}, \mathbf{g}_{\theta^{(k)}}, x_1^{(k)})$ \label{alg:iLQ to x0}

$\nabla_{x_1} \hat{L}(\theta^{(k)},x_1^{(k)})\gets $ evaluated using $\tilde{\mathbf{f}}_{\theta^{(k)}}$ and $\tilde{\mathbf{g}}_{\theta^{(k)}}$ via Gradient Approximation in Section~\ref{sec:gradient approximation}\label{alg:evaluation of GD x0}

$x^{(k+1)}_1\gets x^{(k)}_1 - \eta \nabla_{x_1} \hat{L}(\theta^{(k)},x_1^{(k)}) $ with line search over $\eta$\label{alg:GD x0}

$(\check{x}^{(k)}, \{\check{\gamma}_t^i\}_{t=1,i=1}^{T,N},\check{\mathbf{f}}_{\theta^{(k)}}, \check{\mathbf{g}}_{\theta^{(k)}} )\gets \textrm{ILQGames}\big(\mathbf{f}, \mathbf{g}_{\theta^{(k)}}, x_1^{(k+1)}\big)$\label{alg:update traj}

$\nabla_\theta \hat{L}(\theta^{(k)}, x_1^{(k+1)})\gets$ evaluated using $\check{\mathbf{f}}_{\theta^{(k)}}$ and $\check{\mathbf{g}}_{\theta^{(k)}}$ via Gradient Approximation in Section~\ref{sec:gradient approximation}\label{alg:evaluation of GD theta}

$ \theta^{(k+1)}\gets \theta^{(k)} -\eta' \nabla_\theta\hat{L}(\theta^{(k)}, x_1^{(k+1)})$ with line search over $\eta'$\label{alg:GD theta}

\textbf{Return} $(\theta^{(k+1)},x_1^{(k+1)})$ if $\|\theta^{(k)}-\theta^{(k-1)}\|_2\le \epsilon$ or \textbf{Return} $(\theta^{(k')},x_1^{(k')})$, where $k'\gets \arg\min_k \tilde{L}(\theta^{(k)},x_1^{(k)})$, if iteration number $k$ reaches $K$.

}

\caption{Inverse Iterative LQ (i$^2$LQ) Games}
 \label{Alg: Inverse Game Solution via Gradient Descent}
}
\end{algorithm}

\section{Experiments}
\label{sec: Results}
In this section, we adopt the open-loop solution method of \cite{peters2021inferring} as the baseline method and compare it to Algorithm 1. In particular, we evaluate Algorithm \ref{Alg: Inverse Game Solution via Gradient Descent} in several Monte Carlo studies which aim to justify the following claims.

\begin{itemize}
    \item The proposed gradient approximation often aligns with a descent direction in the loss function. 
    
    \item Algorithm 1 is more robust than the open-loop baseline method \cite{peters2021inferring} with respect to noise in, and incomplete observations of, the expert demonstration trajectory.
    
    \item The cost functions inferred by Algorithm~\ref{Alg: Inverse Game Solution via Gradient Descent} can be generalized to predict trajectories from unseen initial conditions.
    \item Algorithm~1 can infer nonconvex costs in nonlinear games.
\end{itemize}






\subsection{Gradient Approximation Quality}
We continue the 2-vehicle platooning example defined in \eqref{eq:dubins_car} and \eqref{eq:2_cars_costs}. We measure the performance of Algorithm~\ref{Alg: Inverse Game Solution via Gradient Descent} in two settings, incomplete expert trajectory data with noisy partial state observation, and complete expert trajectory data with noisy full observation. In the first case, each player's partial observation only contains its x-position, y-position and heading angle. The time index set of the incomplete trajectory is $\mathcal{T}=[T]\setminus\{11,12,\dots,19\}$. In the second case, the expert data includes the noisy observation of all the states of both players at all $t\in[T]$. The ground truth expert state trajectory follows a FBNE strategy from the initial condition $x_1=[0,0.5,\frac{\pi}{2}, 1,1,0,\frac{\pi}{2},1]$ and the target lane is $p_x^*=0.0$. At each variance level $\sigma\in\{0.004,0.008,\dots,0.04\}$, we generate 10 noisy observations of the ground truth expert trajectory, with isotropic zero-mean Gaussian noise. For each noisy expert data set $\mathbf{y}_\mathcal{T}$, we minimize the negative log-likelihood objective in \eqref{problem_formulation}, i.e., $\sum_{t\in\mathcal{T}} \|y_t - h(x_t) \|_2^2$, where $h(\cdot):\mathbb{R}^n\to \mathbb{R}^\ell $ maps a state $x_t$ to its partial observation.

As shown in Fig.~\ref{fig:gradient descent quality}, the loss decreases monotonically on the average. This indicates that the gradient approximation proposed in Section \ref{sec:gradient approximation} provides a reliable descent direction. The inverse feedback game problem becomes challenging when there is only partial state observation and incomplete trajectory data, and the quality of inferred costs may degrade when the observation noise is high. 


\subsection{Robustness, Generalization and the Ability to Infer Nonconvex Costs}
We continue the previous 2-vehicle example and compare Algorithm~\ref{Alg: Inverse Game Solution via Gradient Descent} and the baseline in a Monte Carlo study, where we infer the costs under 10 different levels of Gaussian noise with increasing variance. In particular, we evaluate three metrics in Fig.~\ref{fig:cars2_comparison}: (a) the distance between the noisy expert data and the FBNE state trajectory which results from players' inferred costs; (b) the distance between the computed FBNE state trajectory (under the players' inferred costs) and the ground truth expert data. 
An example of such a comparison is shown in Fig.~\ref{fig:2_cars_prediction}. Finally, we evaluate (c) the distance between the inferred FBNE state trajectories and the FBNE state trajectory under the ground truth costs for some randomly sampled initial conditions, which is also visualized in Fig.~\ref{fig:2_cars_generalization}. Collectively, the results demonstrate that \emph{Algorithm~\ref{Alg: Inverse Game Solution via Gradient Descent} has better robustness and generalization performance than the open-loop baseline when the expert data follows the FBNE assumption.}

To show that Algorithm 1 can infer nonconvex cost functions, we extend the previous 2-vehicle platooning example and assume that the 2-vehicle team encounters a third vehicle and the follower wants to stay close to the leader without colliding with the third vehicle. We model this scenario as a 3-vehicle game with a 12 dimensional state space and a horizon $T=30$. The dynamics for each vehicle is the same as \eqref{eq:dubins_car} and the costs are as follows,
\begin{equation*}
    \begin{aligned}
    \costcomponent_{t,\theta}^1(x_t,u_t) =&\theta_1^1\|p_{x,t}^1\|_2^2+\theta_2^1\|p_{x,t}^2-p_x^*\|_2^2+\|v_t^1-2\|_2^2 \\&+\|\beta_t^1-\frac{\pi}{2}\|_2^2+ \|u_t^1\|_2^2\\
    \costcomponent_{t,\theta}^2(x_t,u_t) =& \theta_1^2\|p_{x,t}^2\|_2^2+\|\beta_t^2-\frac{\pi}{2}\|_2^2+\theta_2^2\| p_{x,t}^2-p_{x,t}^1 \|_2^2+\|v_t^2-2\|_2^2\\&-\frac{1}{2}\log(\|p_{x,t}^2-p_{x,t}^3\|_2^2+\|p_{y,t}^2-p_{y,t}^3\|_2^2) +\|u_t^2\|_2^2\\
    \costcomponent_{t,\theta}^3(x_t,u_t)=&\theta_1^3\|p_{x,t}^3-\frac{1}{2}\|_2^2 + \|u_t^3\|_2^2
    \end{aligned}\label{examples:3cars}
\end{equation*}
where the ground truth $\theta^*\in\mathbb{R}^5$ is $[0,4,0,4,2]$. The ground truth expert state trajectory follows a FBNE strategy from the initial condition $x_1=[0,1,\frac{\pi}{2},2,0.3,0,\frac{\pi}{2}, 2,0.5,0.5,\frac{\pi}{2},2 ]$, where the last four elements encode the state of the third vehicle. The target lane in the expert data is $p_x^*=0.2$. 

Similar to the 2-vehicle experiment, we consider two settings, incomplete trajectory data with partial state observation and complete trajectory data with full state observation. The partial state observation includes all the states of each vehicle except for the velocity of all the vehicles, and the time indices set of the incomplete trajectory is $\mathcal{T}=[T]\setminus\{11,12,\dots,19\}$. 
The nonconvex cost of player 2 causes numerical problems in the baseline KKT OLNE solver \cite{peters2021inferring}. Thus, we add an $L_2$ regularization $10^{-4}\|\theta\|_2^2$ to the loss $\hat{L}(\theta,x_1)$ and summarize the Monte Carlo study in Fig.~\ref{fig:cars3_comparison}, where we see Algorithm~\ref{Alg: Inverse Game Solution via Gradient Descent} is also able to learn better cost functions reflecting the true intentions of each vehicle in feedback games, even with only partial state observations and incomplete trajectory data. 


\begin{figure}[t!]
    \centering
    \includegraphics[width=0.44\textwidth, trim=0cm 2cm 0cm 0cm]{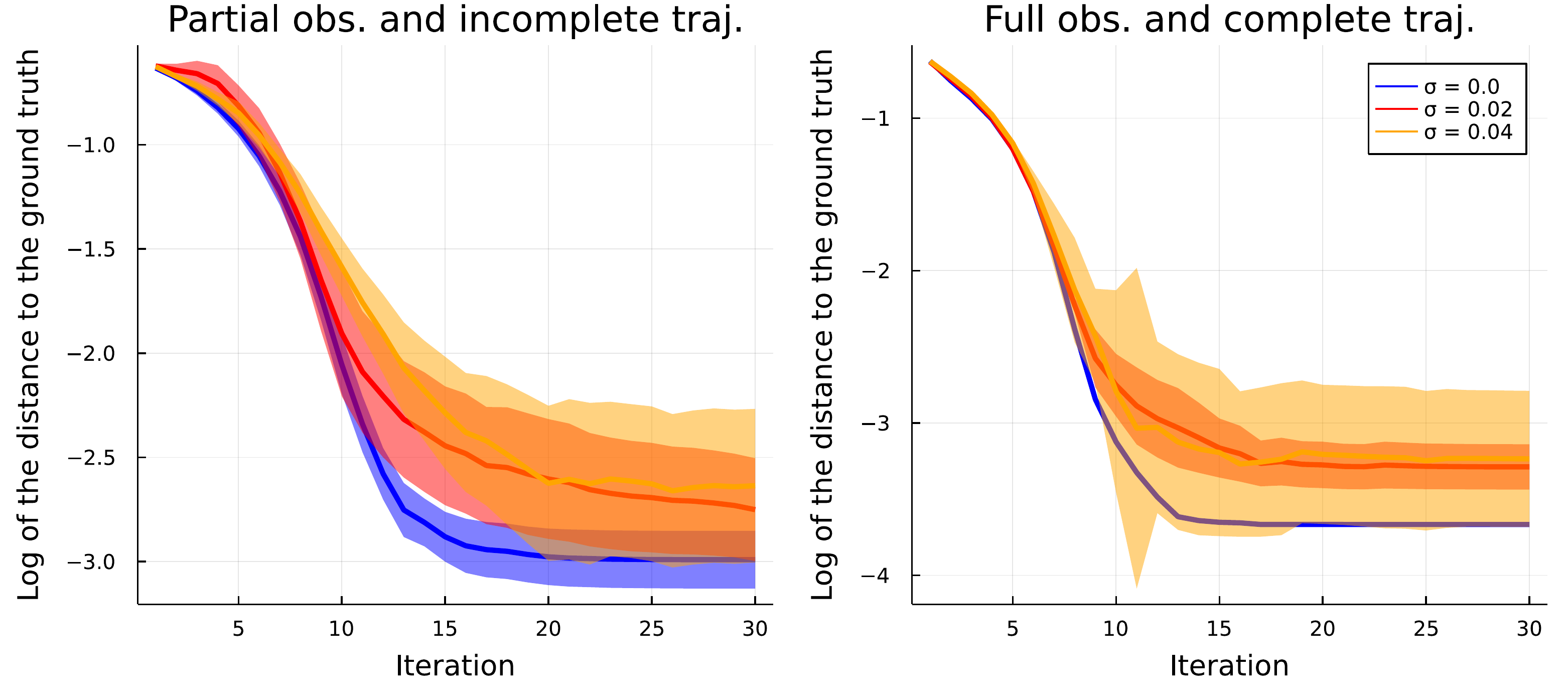}
    \caption{Convergence of Algorithm 1 with the Gradient Approximation proposed in Section \ref{sec:gradient approximation}. The loss decreases monotonically on the average. The bold lines and shaded areas represent the mean values and their standard error, i.e., the variance divided by the square root of the sample size, respectively.}
    \label{fig:gradient descent quality}
\end{figure}

\begin{figure}[t!]
    \centering
    \includegraphics[width=0.44\textwidth, trim=0cm 2cm 0cm 1cm]{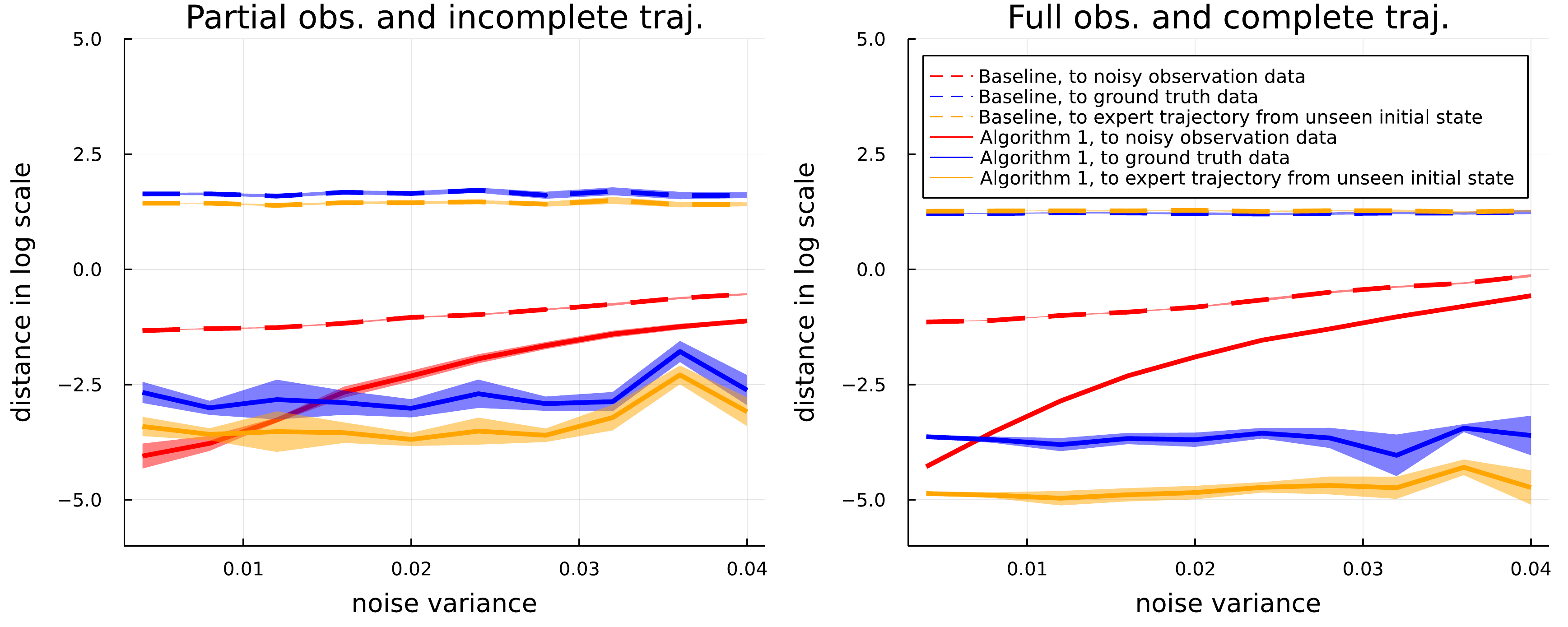}
    \caption{2-vehicle platooning scenario. The bold lines and shaded areas represent the mean values and their standard error, i.e., the variance divided by the square root of the sample size, respectively. As the noise variance growing, the converged loss value increases, as shown in the red curves. However, Algorithm~\ref{Alg: Inverse Game Solution via Gradient Descent} is still able to learn a more accurate cost and has less generalization error than the baseline, as shown in the blue and yellow curves, respectively.}
    \label{fig:cars2_comparison}
\end{figure}

\begin{figure}[t!]
    \centering
    \includegraphics[width=0.44\textwidth, trim = 0cm 1.5cm 0cm 0cm]{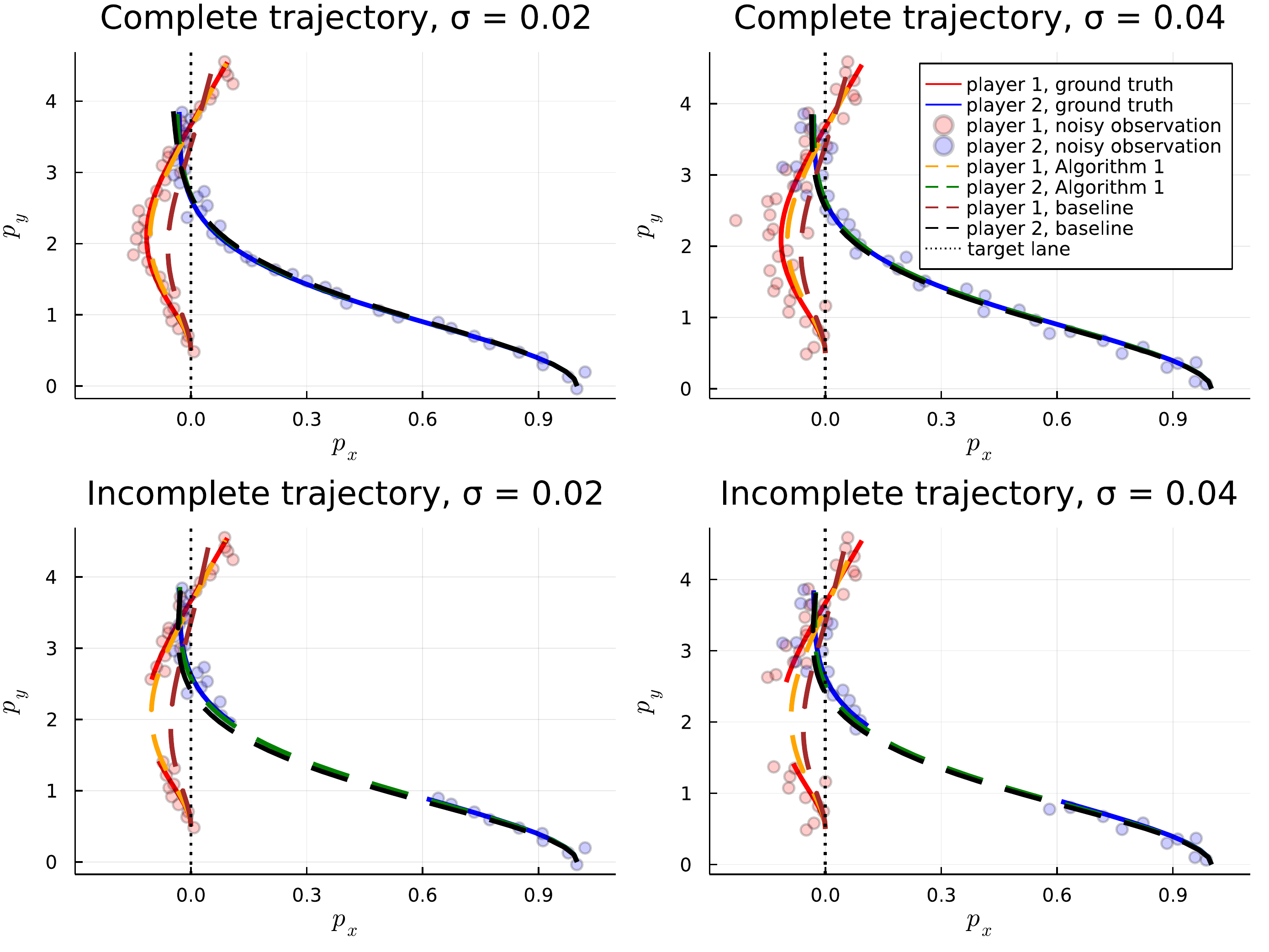}
    \caption{Full and partial, noisy observation of the expert trajectories. Dashed lines represent predicted trajectories which result from inferred costs, and solid lines are ground truth. The trajectories predicted by Algorithm~\ref{Alg: Inverse Game Solution via Gradient Descent} are closer to the ground truth than the baseline.}
    \label{fig:2_cars_prediction}
\end{figure}
\begin{figure}[t!]
    \centering
    \includegraphics[width=0.44\textwidth, trim=0cm 1.5cm 0cm 0cm]{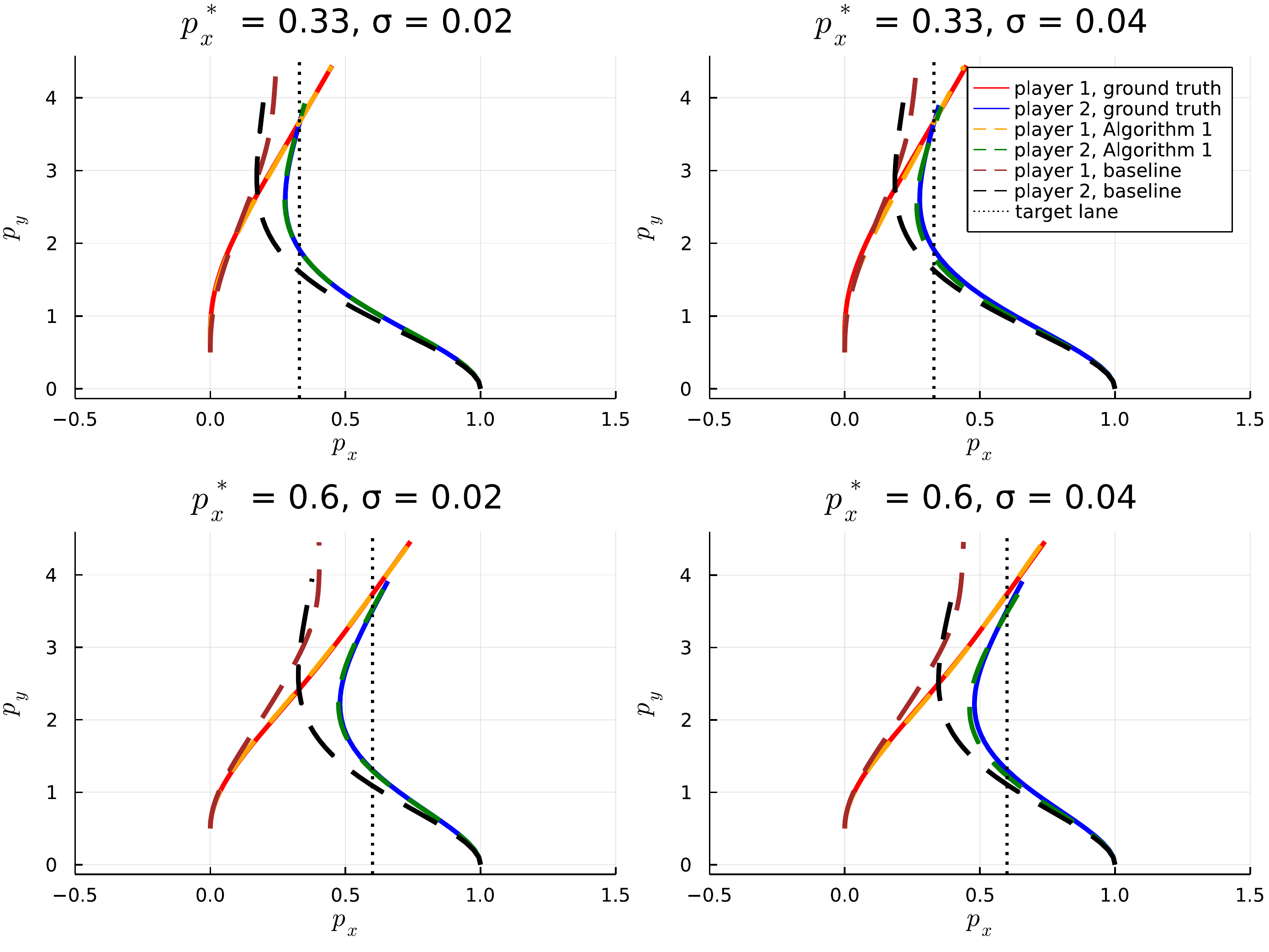}
    \caption{Generalization performance comparison. $p_x^*$ is the target lane position that player 1 wants to guide player 2 toward. All the costs are inferred from partial observations and incomplete trajectory data, with different noise variance specified in each of the subplot. The trajectories predicted by Algorithm~\ref{Alg: Inverse Game Solution via Gradient Descent} are closer to the ground truth than the baseline.}
    \label{fig:2_cars_generalization}
\end{figure}

\begin{figure}[t!]
    \centering
    \includegraphics[width=0.44\textwidth, trim= 0cm 2cm 0cm 1cm]{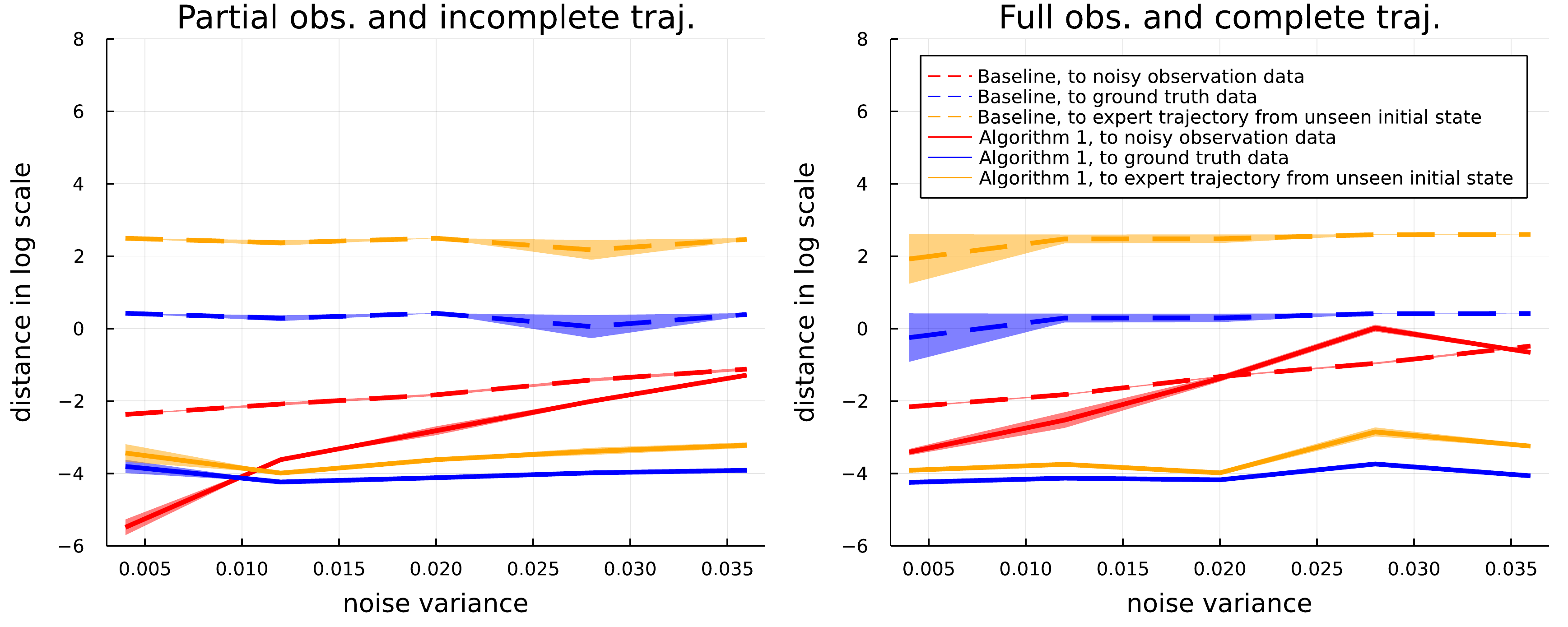}
    \caption{3-vehicle platooning scenario. The bold lines and shaded areas represent the mean values and their standard error, i.e., the variance divided by the square root of the sample size, respectively. As the noise variance growing, the converged loss value increases on the average, as shown in the red curves. However, Algorithm~\ref{Alg: Inverse Game Solution via Gradient Descent} is still able to learn a more accurate cost and has less generalization error than the baseline, as shown in the blue and yellow curves, respectively.}
    \label{fig:cars3_comparison}
\end{figure}

\label{subsec: Discussion}

\section{Conclusion}
\label{sec: Conclusion and Future Work}
In this work, we propose an efficient cost inference algorithm for inverse feedback nonlinear games, with only partial state observation and incomplete trajectory data. 
Empirical results show that the proposed solver converges reliably for inverse games with nonconvex costs and has superior generalization performance than a state-of-the-art open-loop baseline method when the expert demonstration reflects a group of agents acting in a dynamic feedback game. There are many future directions. We can investigate under what conditions the cost can be inferred exactly in feedback games. The active and online inference are also promising directions. In addition, we are eager to extend this work to settings of closed-loop interaction. In such an extension, rather than merely inferring the objectives of observed players, this information would be used to guide the decision-making of an autonomous agent in that scene.

\section*{Appendix}\label{sec: Appendix}

\begin{proof}[Proof of Proposition~\ref{prop:characterization of loss}]
Proposition 1 claims that there exists an inverse LQ game, which has isolated global minima and the induced FBNE state trajectories of those solutions match the expert demonstration. Here, we show such a counterexample, which supports the claim. Consider a 2-player horizon-3 LQ game with the linear dynamics
\begin{equation}\label{example:1D_loss_dynamics}
    x_{t+1} = x_t+u_t^1 + u_t^2, \ \ t\in\{1,2,3\},
\end{equation}
and the cost
\begin{equation}\label{example:1D_loss_costs}
\begin{aligned}
    \costcomponent_{t}^1(\state_t,\ctrl_t) &= \frac{1}{2}(Q^1 \|\state_t\|_2^2 +\|\ctrl_t^1\|_2^2),\ \ t\in \{1,2\},\\
    \costcomponent_{t}^2(\state_t,\ctrl_t) &=\frac{1}{2} (Q^2 \|\state_t\|_2^2 +2\|\ctrl_t^2\|_2^2),\ \ t\in\{1,2\},\\
    \costcomponent_3^1(\state_3,\ctrl_3) &= \frac{1}{2}Q^1\|\state_3\|_2^2,\ \costcomponent_3^2(\state_3,\ctrl_3) = \frac{1}{2}Q^2\|\state_3\|_2^2.
\end{aligned}
\end{equation}
We assume that the ground truth solutions are $Q^1=1$, $Q^2=1$. We will show there is also one extra solution $\hat{Q}^1=\frac{1}{2}$ and $\hat{Q}^2=2$, which yields the same FBNE state trajectory as the ground truth for any initial condition. We follow the same definition of the variable $\{Z_t^i\}_{t=1,i=1}^{3,2}$ as in \cite{basar1998DynamicNoncooperativeGameTheory}. 
By definition, we have $Z_t^i\ge Q^i>0$, when $Q^1\in\mathbb{R}_+$ and $Q^2\in\mathbb{R}_+$. Following the notations in FBNE condition in Corollary 6.1 of \cite{basar1998DynamicNoncooperativeGameTheory}, we consider the feedback matrices $\{P_t^i\}_{t=1,i=1}^{2,2}$, 
\begin{equation}
    \begin{bmatrix}
    P_t^1\\ P_t^2
    \end{bmatrix}=\underbrace{\begin{bmatrix}
    1 + Z_{t+1}^1 & Z_{t+1}^1\\ Z_{t+1}^2 & 2+Z_{t+1}^2
    \end{bmatrix}}_{G_t^i} \begin{bmatrix}
    Z_{t+1}^1 \\ Z_{t+1}^2
    \end{bmatrix},\ \  \forall t\in \{1,2\},
\end{equation}
where the matrix $G_t^i$ is invertible because $\det (G_t^i) = 2+Z_{t+1}^2 +2Z_{t+1}^1 >0$. The above analysis suggests that the FBNE state trajectory for all $Q^1>0$ and $Q^2>0$ are uniquely determined. We consider the time instant $t=2$, and observe
\begin{equation}
    \begin{bmatrix}
    P_{2}^1 \\ P_{2}^2
    \end{bmatrix} = \begin{bmatrix}
    1+Q^1 & Q^1 \\ Q^2 & 2+Q^2
    \end{bmatrix}^{-1} \begin{bmatrix}
    Q^1\\ Q^2
    \end{bmatrix}=\frac{1}{2+2Q^1 + Q^2}\begin{bmatrix}2Q^1 \\ Q^2
    \end{bmatrix}.
\end{equation}
We then have the closed-loop dynamics $x_{3} = (1-P_{2}^1-P_{2}^2)x_{2}
 =\frac{2}{2+2Q^1 + Q^2}x_{2}$,
which yields that for two pairs of positive variables $(Q^1,Q^2)$ and $(\hat{Q}^1, \hat{Q}^2)$, a necessary condition for them to have the same FBNE trajectory is that $2Q^1+Q^2 = 2\hat{Q}^1 + \hat{Q}^2$. We have $Z_{2}^1 = Q^1 + \frac{Q^1+(2Q^1)^2}{(2+2Q^1+Q^2)^2},\ 
    Z_{2}^2 = Q^2 + \frac{Q^2+2(Q^2)^2}{(2+2Q^1+Q^2)^2}$.
Similarly, for the time instant $t=1$, we have $x_{2}=(1-P_{1}^1-P_{1}^2)x_{1} = \frac{2}{2+2Z_2^1 + Z_2^2}x_{1}$.
A necessary condition for $(\hat{Q}^1,\hat{Q}^2)$ to have the same FBNE state trajectory as $(Q^1,Q^2)$ is that the following 2 equations are satisfied,
\begin{equation}\label{eq:2_LQ_equation}
    \begin{aligned}
        &2Q^1+Q^2 = 2\hat{Q}^1 + \hat{Q^2}\\
        &2\big(Q^1+\frac{Q^1+(2Q^1)^2}{(2+2Q^1+Q^2)^2} \big) + Q^2+ \frac{Q^2+2(Q^2)^2}{(2+2Q^1+Q^2)^2}\\ 
        &\ \ \ \ = 2\big(\hat{Q}^1+\frac{\hat{Q}^1+(2\hat{Q}^1)^2}{(2+2\hat{Q}^1+\hat{Q}^2)^2} \big) + \hat{Q}^2+ \frac{\hat{Q}^2+2(\hat{Q}^2)^2}{(2+2\hat{Q}^1+\hat{Q}^2)^2}.
    \end{aligned}
\end{equation}
We substitute $Q^1=1$, $Q^2=1$ and $\hat{Q}^2=3-2\hat{Q}^1$ into the second row of \eqref{eq:2_LQ_equation}, which is reduced to a 2-degree polynomial of $\hat{Q}^2$. By the fundamental theorem of algebra \cite{cauchy1821cours}, there exist at most 2 solutions for $\hat{Q}^2$. The two pairs of $(\hat{Q}^1,\hat{Q}^2)$ satisfying \eqref{eq:2_LQ_equation} are $(1,1)$ and $(\frac{1}{2},2)$. The two global minima are isolated. Since the dimension of the state $x_t$ is 1, for all initial states $x_1\in\mathbb{R}$, the FBNE state trajectories under the costs specified by the two pairs cost parameters $(1,1)$ and $(\frac{1}{2}, 2)$ coincide with each other.
\end{proof}

\bibliographystyle{plain}

\bibliography{references}

\begin{thebibliography}{10}

\bibitem{awasthi2020inverse}
Chaitanya Awasthi and Andrew Lamperski.
\newblock Inverse differential games with mixed inequality constraints.
\newblock In {\em 2020 American control conference (ACC)}, pages 2182--2187.
  IEEE, 2020.

\bibitem{basar1998DynamicNoncooperativeGameTheory}
Tamer Ba{\c{s}}ar and Geert~Jan Olsder.
\newblock {\em {Dynamic Noncooperative Game Theory}}.
\newblock SIAM, 1998.

\bibitem{boyd2004convex}
Stephen Boyd, Stephen~P Boyd, and Lieven Vandenberghe.
\newblock {\em Convex optimization}.
\newblock Cambridge university press, 2004.

\bibitem{cauchy1821cours}
Augustin-Louis Cauchy.
\newblock Cours d’analyse de l’ecole royale polytechnique, 1re partie.
\newblock {\em Analyse alg{\'e}brique. Debure freres, Paris}, 1821.

\bibitem{cleac2019algames}
Simon~Le Cleac'h, Mac Schwager, and Zachary Manchester.
\newblock Algames: A fast solver for constrained dynamic games.
\newblock {\em arXiv preprint arXiv:1910.09713}, 2019.

\bibitem{cruz1975survey}
JB~Cruz~Jr.
\newblock Survey of nash and stackelberg equilibrim strategies in dynamic
  games.
\newblock In {\em Annals of Economic and Social Measurement, Volume 4, number
  2}, pages 339--344. NBER, 1975.

\bibitem{englert2017inverse}
Peter Englert, Ngo~Anh Vien, and Marc Toussaint.
\newblock Inverse kkt: Learning cost functions of manipulation tasks from
  demonstrations.
\newblock {\em The International Journal of Robotics Research},
  36(13-14):1474--1488, 2017.

\bibitem{fridovich2020efficient}
David Fridovich-Keil, Ellis Ratner, Lasse Peters, Anca~D Dragan, and Claire~J
  Tomlin.
\newblock {Efficient Iterative Linear-Quadratic Approximations for Nonlinear
  Multi-Player General-Sum Differential Games}.
\newblock {\em 2020 IEEE international conference on robotics and automation
  (ICRA)}, pages 1475--1481, 2020.

\bibitem{gabler2017game}
Volker Gabler, Tim Stahl, Gerold Huber, Ozgur Oguz, and Dirk Wollherr.
\newblock A game-theoretic approach for adaptive action selection in close
  proximity human-robot-collaboration.
\newblock In {\em 2017 IEEE international conference on robotics and automation
  (ICRA)}, pages 2897--2903. IEEE, 2017.

\bibitem{herrera2019algorithm}
Jorge Herrera de~la Cruz, Benjamin Ivorra, and {\'A}ngel~M Ramos.
\newblock An algorithm for solving a class of multiplayer feedback-nash
  differential games.
\newblock {\em Mathematical Problems in Engineering}, 2019, 2019.

\bibitem{inga2019inverse}
Jairo Inga, Esther Bischoff, Florian K{\"o}pf, and S{\"o}ren Hohmann.
\newblock Inverse dynamic games based on maximum entropy inverse reinforcement
  learning.
\newblock {\em arXiv preprint arXiv:1911.07503}, 2019.

\bibitem{inga2019solution}
Jairo Inga, Esther Bischoff, Timothy~L Molloy, Michael Flad, and S{\"o}ren
  Hohmann.
\newblock {Solution Sets for Inverse Non-Cooperative Linear-Quadratic
  Differential Games}.
\newblock {\em IEEE Control Systems Letters}, 3(4):871--876, 2019.

\bibitem{isaacs1999differential}
Rufus Isaacs.
\newblock {\em Differential games: a mathematical theory with applications to
  warfare and pursuit, control and optimization}.
\newblock Courier Corporation, 1999.

\bibitem{kossioris2008feedback}
Georgios Kossioris, Michael Plexousakis, Anastasios Xepapadeas, Aart de~Zeeuw,
  and K-G M{\"a}ler.
\newblock Feedback nash equilibria for non-linear differential games in
  pollution control.
\newblock {\em Journal of Economic Dynamics and Control}, 32(4):1312--1331,
  2008.

\bibitem{krantz2002implicit}
Steven~George Krantz and Harold~R Parks.
\newblock {\em The implicit function theorem: history, theory, and
  applications}.
\newblock Springer Science \& Business Media, 2002.

\bibitem{laine2021computation}
Forrest Laine, David Fridovich-Keil, Chih-Yuan Chiu, and Claire Tomlin.
\newblock {The Computation of Approximate Generalized Feedback Nash
  Equilibria}.
\newblock {\em arXiv preprint arXiv:2101.02900}, 2021.

\bibitem{lee2008human}
Kang~Woo Lee and Jeong-Hoon Hwang.
\newblock Human--robot interaction as a cooperative game.
\newblock {\em Trends in Intelligent Systems and Computer Engineering}, pages
  91--103, 2008.

\bibitem{LiTodorov2004iterative}
Weiwei Li and Emanuel Todorov.
\newblock {Iterative Linear Quadratic Regulator Design for Nonlinear Biological
  Movement Systems}.
\newblock {\em ICINCO}, pages 222--229, 2004.

\bibitem{mayne1966second}
David Mayne.
\newblock A second-order gradient method for determining optimal trajectories
  of non-linear discrete-time systems.
\newblock {\em International Journal of Control}, 3(1):85--95, 1966.

\bibitem{Mehr2021MaximumEntropyMultiAgentDynamicGames}
Negar Mehr, Mingyu Wang, and Mac Schwager.
\newblock {Maximum-Entropy Multi-Agent Dynamic Games: Forward and Inverse
  Solutions}.
\newblock {\em ArXiv}, abs/2110.01027, 2021.

\bibitem{molloy2022inversebook}
Timothy~L Molloy, Jairo~Inga Charaja, S{\"o}ren Hohmann, and Tristan Perez.
\newblock Inverse optimal control and inverse noncooperative dynamic game
  theory, 2022.

\bibitem{MOLLOY201711788}
Timothy~L. Molloy, Jason~J. Ford, and Tristan Perez.
\newblock Inverse noncooperative dynamic games.
\newblock {\em IFAC-PapersOnLine}, 50(1):11788--11793, 2017.
\newblock 20th IFAC World Congress.

\bibitem{molloy2018inverse}
Timothy~L Molloy, Grace~S Garden, Tristan Perez, Ingo Schiffner, Debajyoti
  Karmaker, and Mandyam~V Srinivasan.
\newblock An inverse differential game approach to modelling bird mid-air
  collision avoidance behaviours.
\newblock {\em IFAC-PapersOnLine}, 51(15):754--759, 2018.

\bibitem{molloy2019inverse}
Timothy~L Molloy, Jairo Inga, Michael Flad, Jason~J Ford, Tristan Perez, and
  S{\"o}ren Hohmann.
\newblock Inverse open-loop noncooperative differential games and inverse
  optimal control.
\newblock {\em IEEE Transactions on Automatic Control}, 65(2):897--904, 2019.

\bibitem{molloy2022inverse}
Timothy~L Molloy, Jairo Inga~Charaja, S{\"o}ren Hohmann, and Tristan Perez.
\newblock Inverse noncooperative differential games.
\newblock In {\em Inverse Optimal Control and Inverse Noncooperative Dynamic
  Game Theory}, pages 189--226. Springer, 2022.

\bibitem{nesterov1983method}
Yurii Nesterov.
\newblock A method for unconstrained convex minimization problem with the rate
  of convergence o (1/k\^{} 2).
\newblock In {\em Doklady an ussr}, volume 269, pages 543--547, 1983.

\bibitem{nocedal2006numerical}
Jorge Nocedal and Stephen Wright.
\newblock {\em Numerical Optimization}.
\newblock Springer Science \& Business Media, 2006.

\bibitem{peters2021inferring}
Lasse Peters, David Fridovich-Keil, Vicen{\c{c}} Rubies-Royo, Claire~J Tomlin,
  and Cyrill Stachniss.
\newblock {Inferring Objectives in Continuous Dynamic Games from
  Noise-Corrupted Partial State Observations}.
\newblock {\em arXiv preprint arXiv:2106.03611}, 2021.

\bibitem{peters2020inference}
Lasse Peters, David Fridovich-Keil, Claire~J. Tomlin, and Zachary~N. Sunberg.
\newblock {Inference-Based Strategy Alignment for General-Sum Differential
  Games}.
\newblock In {\em Proceedings of the 19th International Conference on
  Autonomous Agents and MultiAgent Systems}, AAMAS '20, page 1037–1045,
  Richland, SC, 2020. International Foundation for Autonomous Agents and
  Multiagent Systems.

\bibitem{ratliff2016characterization}
Lillian~J Ratliff, Samuel~A Burden, and S~Shankar Sastry.
\newblock On the characterization of local nash equilibria in continuous games.
\newblock {\em IEEE transactions on automatic control}, 61(8):2301--2307, 2016.

\bibitem{rothfuss2017inverse}
Simon Rothfu{\ss}, Jairo Inga, Florian K{\"o}pf, Michael Flad, and S{\"o}ren
  Hohmann.
\newblock Inverse optimal control for identification in non-cooperative
  differential games.
\newblock {\em IFAC-PapersOnLine}, 50(1):14909--14915, 2017.

\bibitem{schwarting2019social}
Wilko Schwarting, Alyssa Pierson, Javier Alonso-Mora, Sertac Karaman, and
  Daniela Rus.
\newblock Social behavior for autonomous vehicles.
\newblock {\em Proceedings of the National Academy of Sciences},
  116(50):24972--24978, 2019.

\bibitem{schwarting2021stochastic}
Wilko Schwarting, Alyssa Pierson, Sertac Karaman, and Daniela Rus.
\newblock Stochastic dynamic games in belief space.
\newblock {\em IEEE Transactions on Robotics}, 37(6):2157--2172, 2021.

\bibitem{sutskever2013importance}
Ilya Sutskever, James Martens, George Dahl, and Geoffrey Hinton.
\newblock On the importance of initialization and momentum in deep learning.
\newblock In {\em International conference on machine learning}, pages
  1139--1147. PMLR, 2013.

\bibitem{tanwani2019feedback}
Aneel Tanwani and Quanyan Zhu.
\newblock Feedback {N}ash equilibrium for randomly switching
  differential--algebraic games.
\newblock {\em IEEE Transactions on Automatic Control}, 65(8):3286--3301, 2019.

\bibitem{yu2022inverse}
Chengpu Yu, Yao Li, Shukai Li, and Jie Chen.
\newblock Inverse linear quadratic dynamic games using partial state
  observations.
\newblock {\em Automatica}, 145:110534, 2022.

\end{thebibliography}

\end{document}